\documentclass[runningheads, orivec]{llncs}

\usepackage{amsthm}
\usepackage[T1]{fontenc}
\usepackage{listings}
\usepackage[frozencache, cachedir=minted-cache]{minted2}
\usepackage{amsmath}
\usepackage{amssymb}
\usepackage{hyperref}
\usepackage{wrapfig}
\usepackage{graphicx}
\usepackage{xspace}
\usepackage[final,commandnameprefix=always,deletedmarkup=xout]{changes}
\usepackage{orcidlink}
\usepackage{bbding}
\usepackage{siunitx}
\usepackage[title]{appendix}
\usepackage{acronym}
\usepackage{pgf}
\usepackage{lmodern}
\usepackage{mathtools}
\usepackage{ifsym}
\usepackage{import}

\everymath=\expandafter{\the\everymath\displaystyle}
\makeatletter\@ifpackageloaded{underscore}{}{\usepackage[strings]{underscore}}\makeatother

\definechangesauthor[name=Alistair, color=blue]{AS}
\definechangesauthor[name=Fleur, color=green]{FC}
\newcommand{\listingsize}{\small}
\newcommand{\network}{\mathcal{N}}

\newcommand{\vehicle}{Vehicle\xspace}
\newcommand{\rocq}{Rocq\xspace}
\newcommand{\csafe}{\ensuremath{C_{\mathit{safe}}}}
\newcommand{\aref}[1]{\hyperref[#1]{Appendix~\ref*{#1}}}
\newcommand{\ct}{\ensuremath{C_\Sigma}}

\newcommand{\concat}{\mathbin{{+}\mspace{-8mu}{+}}}

\begin{document}
\newacro{PK}{pharmacokinetic}
\newacro{PD}{pharmacodynamic}
\newacro{MRSA}{methicillin-resistant Staphylococcus aureus}
\newacro{NCPS}{Neural Cyber Physical System}
\newacro{NNV}{Neural Network Verification}
\newacro{ODE}{ordinary differential equation}
\newacro{ReLU}{Rectified Linear Unit}
\newacro{DAE}{differential-algebraic equation}
\newacro{CBF}{Control barrier function}
\newacro{SMT}{satisfiability modulo theories}
\newacro{dL}{differential dynamic logic}
\newacro{TDM}{therapeutic drug monitoring}
\newacro{MIPD}{Model-informed precision dosing}
\newacro{CLI}{command line interface}
\newacro{ITP}{interactive theorem prover}
\newacro{ATP}{automated theorem prover}

\title{Vancomycert: A Certified Neuro-Symbolic Drug Delivery System (Case Study)}
% \title{Neural Network Verification in Healthcare: A Case Study Towards Infinite-Horizon Dosing Safety}

\titlerunning{Vancomycert: Certified Neural Dosing}

% If there are any changes let alistair know (my industrial supervisor wants to
% be kept updated)
\author{
Alistair Sirman\inst{1}\Envelope\orcidlink{0009-0009-3342-089X} \and
Fleur Conway\inst{2, 3}\orcidlink{0009-0005-8990-4202} \and
Jessica Ciupa\inst{2,3}\orcidlink{0009-0003-9505-0608} \and
Gusts Gustavs Grīnbergs\inst{4}\orcidlink{0009-0006-1013-3773}\and
Ekaterina Komendantskaya\inst{1, 3}\orcidlink{0000-0002-3240-0987} \and
Thai Son Hoang\inst{1}\orcidlink{0000-0003-4095-0732} \and
Michael Rawson\inst{1}\orcidlink{0000-0001-7834-1567} \and
Alessandro Bruni\inst{4}\orcidlink{0000-0003-2946-9462}\and
Vaishak Belle\inst{2}\orcidlink{0000-0001-5573-8465}\and
Michael John Williams\inst{5}\orcidlink{0000-0003-4735-7229}
}
\authorrunning{Sirman et al.}

\institute{
  University of Southampton, Southampton, United Kingdom\\
  \email{A.Sirman@soton.ac.uk} \and
University of Edinburgh, Edinburgh, United Kingdom\and
Heriot-Watt University, Edinburgh, United Kingdom\and
IT University of Copenhagen, Copenhagen, Denmark\and
Schlumberger Cambridge Research, Cambridge, United Kingdom
}

\maketitle
\begin{abstract}
  Neural network controllers for autonomous decision-making are well-established
  in cyber-physical systems, yet their deployment in safety-critical healthcare
  settings remains largely unverified. This paper presents a methodology and
  case study for the formal verification of a neural network controller for
  antibiotic dosing, motivated by the challenge of systems that must be
  simultaneously adaptive and provably safe across unbounded time horizons. We
  construct a simplified yet clinically-interpretable %\ac{PK} and \ac{PD}
  model that tracks drug concentration, body temperature, and white blood cell
  count. \textit{Vancomycin} is selected as a representative antibiotic, widely
  prescribed for severe infections yet carrying a narrow therapeutic window,
  where supratherapeutic concentrations risk nephrotoxicity and subtherapeutic
  dosing risks treatment failure. A supervised neural network controller is
  trained on synthetic clinician-style dosing data. We establish formal
  verification of input-output safety properties, specifically verifying a
  property of a neural network that implies an infinite-horizon proof that
  automated dosing never exceeds the supratherapeutic boundary. This system
  property is proven in \rocq using the \vehicle interactive theorem prover
  back-end to integrate the different proof systems. The end result is a
  verification pipeline that allows for a wide variety of treatment approaches
  whilst maintaining safety for each specific patient.
\end{abstract}

\keywords{Neural Network Verification \and Antibiotic Dosing \and
  Infinite-Horizon Safety \and Pharmacokinetics \and Safety-Critical AI}

% TODO: Focus more on what case study is about and not PK/PD
% TODO: Less abstraction
% TODO: Vehicle generated spec -- could be a focus

\section{Introduction}
\label{section:intro}

Vancomycin is one of the most widely prescribed antibiotics in hospital
settings, serving as a first-line treatment for severe Gram-positive infections
\cite{chun_impact_2021,pereboom_clinical_2019}. Its therapeutic window is
narrow: supratherapeutic concentrations risk nephrotoxicity, while
subtherapeutic dosing risks treatment failure \cite{zhang_revised_2025}. Dosing
is further complicated by substantial inter-patient variability driven by
various factors such as body weight, age, and severity of illness
\cite{marsot_vancomycin_2012}. Even with \ac{TDM} in place, therapeutic targets
are routinely missed: vancomycin-induced nephrotoxicity has been reported in
31.7\% of patients in clinical practice, with elevated concentrations identified
as an independent predictor of mortality \cite{al-maqbali_vancomycin_2022}.
% TODO: Relook at this
% Although \ac{MIPD} tools, tools that use mathematical and statistical models to
% produce individualised doses~\cite{perez-blancoModelInformedPrecisionDosing2022},
% improve on this, they exhibit heterogeneous predictive performance across
% population models \cite{wicha_therapeutic_2021}.\comment[id=MR]{This can be said
%   more clearly}\comment[id=AS]{I have a paper~\cite{chenEthnicRacialDifferences2006}
%   that supports this issue explicitly across all dosing domains that might
%   strengthen this argument}

\ac{MIPD} remains model-dependent, human-mediated, and unable to provide
continuous autonomous control. Therefore, a closed-loop controller that
adaptively responds to patient-specific pharmacokinetic state and evolving
infection markers is an attractive alternative to static dosing protocols.
Neural network controllers are particularly well-suited to this role: they have
been shown to outperform classical rule-based approaches in settings with
complex non-linear dynamics and high inter-patient variability
\cite{poweleitArtificialIntelligenceMachine2023}, making them a natural
candidate for optimising dosing within the narrow vancomycin therapeutic window.
However, the same adaptivity that makes neural network controllers appealing
also introduces safety risks. In the vancomycin setting, where supratherapeutic
concentrations are already a known cause of irreversible renal injury, any
autonomous controller operating without verified safety bounds risks compounding
these outcomes at scale and without human oversight.

In safety-critical domains, the consequences of undetected failure are
irreversible, such as nephrotoxicity in this case study.
Formal verification offers a strong guarantee; that the autonomous controller
\emph{cannot} violate the safety properties as laid out in the specification.
The formal verification of neural network controllers has seen substantial
progress over the past decade. Tools such as Marabou \cite{katz2019marabou},
$\alpha$,$\beta$-CROWN \cite{wang_beta-crown_2021}, and ERAN
\cite{singh_fast_2018} can establish input-output safety properties over all
inputs within a specified domain, and have been demonstrated on controllers in
safety-critical settings including airborne collision avoidance
\cite{katz_reluplex_2017} and autonomous driving \cite{teuber_provably_2024}.

Neural network verifiers consider the network in isolation, ignoring system
dynamics such as pharmacokinetics and the time component. As a result, they are
restricted to properties expressible as linear or polynomial constraints
(depending on the solver) over a single input-output relation, with no capacity
to reason about how doses propagate through \ac{PK}/\ac{PD} dynamics over time.
In addition, closed-loop verifiers such as NNV \cite{tran_nnv_2020} and VeriSig
2.0 \cite{ivanov_verisig_2021} propagate reachable sets through combined
network-plant dynamics, but approximation errors compound over successive steps,
making unbounded-horizon guarantees inaccessible. Also, the proof must hold
parametrically across a wide population of patients whose \ac{PK} parameters
vary with age, weight, and renal function, a form of generalisation that
existing verifiers do not support. An \ac{ITP} such as \rocq addresses these
gaps: allowing a user to do arbitrary non-linear reasoning, establish invariants
over infinite time horizons, and create parametric proofs that range over
patient models, with a type checker to confirm the validity of the proofs, and
can interface with the neural network solvers through the intermediary tools.

We present Vancomycert, a new case study to demonstrate a complete, end-to-end
formally verified neural network controller for a medical dosing task. The key
enabler is \vehicle \cite{FoMLAS2023:Vehicle_Tutorial_Neural_Network}, a
domain-specific language for expressing properties over neural networks that
compiles a single high-level specification to two%\comment[id=AS]{three to two
% if no pdt}
back-ends simultaneously:
%a TensorFlow loss function for property-driven training \comment[id=AS]{We dont do pdt}, 
a set of Marabou queries for automated reasoning upon neural network
verification, and an extraction into \rocq for integration with an \ac{ITP}.
Prior applications of \vehicle have demonstrated either the training and
verification pipeline \cite{FoMLAS2023:Vehicle_Tutorial_Neural_Network} or the
\ac{ITP} integration \cite{daggitt_vehicle_2026} in isolation; Vancomycert is
the first study to exercise the full pipeline end-to-end, using \vehicle as a
single source of truth that
connects %property-driven training\comment[id=AS]{We dont do pdt},
automated verification, and infinite-horizon proof in one unified specification.
% \comment[id=MR]{This paragraph is way too advertisey and doesnt serve much other
%   purpose, and could probably be removed}

This paper makes \chreplaced[id=AS]{two}{three} contributions. First, we present
a realistic, clinically motivated benchmark for neural network verification: a
closed-loop antibiotic dosing controller for vancomycin, with a fully specified
\ac{PK}/\ac{PD} model, a trained neural network controller, and a complete
formal safety specification, providing a novel benchmark for the \ac{NNV}
community. Second, we propose an original methodology for decomposing
infinite-horizon safety verification into two levels of abstraction: a
system-level proof in \rocq that establishes the infinite-horizon, non-linear
invariant over the closed-loop \ac{PK}/\ac{PD} dynamics, and a \ac{NNV}-level
property $\psi$ that is linear and tractable for automated tools, with \rocq
proving that $\psi$ is sufficient to imply the system-level guarantee. This
separation of concerns is the key architectural contribution of the paper.
% Third, we demonstrate that \vehicle can serve as a single source of truth for
% the theorem prover to rely on.
% \comment[id=AS]{This could be a decent point but
%   at the moment it doesnt say much. The advantage is no other tools, no other
%   axioms, so if vehicle is correct, then so is the proof. This also may not be
%   stricyly true, depending on if you include Marabou as part of it}

% \comment[id=AS]{Again, relatively advertisery}
% TODO: Maybe repeated in background, check and if not make less advertisery
% In doing so, we provide a rich
% showcase of \vehicle's expressive capabilities, exercising parametric
% specifications, normalisation, and \ac{ITP} extraction simultaneously within a
% single specification. Therefore, representing an extensive use of \vehicle's
% features, demonstrating the flexibility and breadth of its specification
% language across a realistic, end-to-end pipeline.

\section{Background}
\label{section:background}

\subsection{Pharmacokinetic and Pharmacodynamic Modelling}

\ac{PK} models describe how a drug moves through the 
body over time, capturing the processes of absorption, distribution, 
metabolism, and elimination. \ac{PD} models describe 
the relationship between drug concentration and its biological effect. 
Together, \ac{PK}/\ac{PD} models form the mathematical foundation for 
dosing optimisation, expressing the body's response to a drug as a system 
of \acp{ODE} whose parameters vary between patients according to factors 
such as age, weight, and renal function \cite{marsot_vancomycin_2012}.
The simplest widely-used \ac{PK} model is the one-compartment model, 
which treats the body as a single homogeneous volume and assumes 
first-order elimination: the rate at which a drug is cleared is 
proportional to its current concentration. This model is clinically 
interpretable and verification-tractable, making it the foundation of 
the \ac{PK}/\ac{PD} model used in this case study. While more accurate 
multi-compartment models exist for vancomycin \cite{marsot_vancomycin_2012}, 
the one-compartment formulation preserves the essential dynamics relevant 
to safety verification whilst remaining amenable to formal analysis. \chadded[id=FC]{Moreover, the possible formulations of compartments for both the \ac{PK} and \ac{PD} equations are extensive and a research area in itself~\cite{yoon_model-informed_2023}. Hence, for tractability, the scope is restricted to verification, and simplified equations are used}. \chdeleted[id=AS]{ The}
The key parameters governing this model are the absorption rate 
constant $k_a$, the elimination rate constant $k_e$, and the volume 
of distribution $V_d$, all of which vary between patients according 
to physiological factors such as renal function, body weight, and age 
\cite{marsot_vancomycin_2012}. The full model, including the concentration 
equations and pharmacodynamic infection markers, is presented in 
Section~\ref{section:casestudy}.

\subsection{Neural Network Verification}

Neural network verification (\ac{NNV}) addresses the problem of establishing
that a network's input-output behaviour satisfies a formally specified property
for \textit{all} inputs within a given domain \cite{casadio_neural_2022}. The
most common class of properties are safety and robustness properties. Safety
properties assert that outputs remain within a safe region for all inputs in a
specified domain. Robustness properties assert that small perturbations to an
input cannot change the network's output beyond a specified bound. Current
\ac{NNV} tools, including the Marabou framework \cite{katz2019marabou} used in
this work, operate by encoding the network's computation as a set of constraints
and querying a solver for a counterexample. A key limitation shared by all such
tools is that verifiable properties must be expressible as linear constraints
over the network's input-output relation. Non-linear system dynamics, unbounded
time horizons, and parametric patient variation therefore fall outside their
scope, motivating the \ac{ITP}-based reasoning approach taken in this
case-study.

\subsection{Vehicle}

\vehicle \cite{FoMLAS2023:Vehicle_Tutorial_Neural_Network,daggitt_vehicle_2025} 
is a dependently-typed domain-specific language for writing logical specifications 
for neural networks. Specifications are written over the problem domain rather than 
over normalised network inputs, and compiled automatically to verification back-ends 
and \ac{ITP} extractions. Prior applications of \vehicle have demonstrated individual 
features in relative isolation: hyper-rectangle robustness \cite{casadio_neural_2022},
or robustness properties over vision models \cite{FoMLAS2023:Vehicle_Tutorial_Neural_Network}.
This case study exercises a significantly broader range of \vehicle's specification 
language features simultaneously, and in combination, to handle the demands of a medical
safety-critical domain.

In \vehicle, \textit{Parametric specifications} allow us to write a specification with
constants that are given only at
compile time by being %$k_a$, $k_e$, $V_d$, $C_{safe}$, and
% $t_{td}$ to\comment[id=AS]{Not defined yet, and probably over detailed for this section}
passed as external command-line parameters. This makes the specification generic
across patients without modification, as information about the patient that is
invariant and not known to the neural network can be easily generalised over.
\textit{Dataset declarations} load normalisation statistics from external files,
ensuring the specification operates over the same input distribution as
training. \textit{Network composition} allows
a normalised version of the network to be defined as the composition
of the raw network with the
normalisation transform, so that all properties can be stated directly over
unnormalised clinical inputs. \textit{Conditional branching} in the output
property handles the two approximation directions for the irrational exponential
terms in $\psi$, a property structure that goes beyond simple linear bounds.
Finally, all three \vehicle back-ends
from % property-driven training\comment[id=AS]{We dont do pdt},
Marabou verification, and \rocq extraction are exercised simultaneously from a
single specification, using the \ac{ITP} extraction.

\subsection{Rocq}

\rocq \cite{majumdar_smtcoq_2017} is a proof assistant based on the Calculus of
Inductive Constructions, in which propositions are types and proofs are terms.
It supports machine-checked formal proofs over arbitrary mathematical
structures, including the real analysis required to reason about continuous
\ac{PK}/\ac{PD} dynamics. The formalisation in this paper builds on the MathComp
Analysis library \cite{affeldt_mathcomp-analysis_2026}, which provides the
definitions and theorems needed to reason about continuous functions and
integrals over the reals. The key capabilities of \rocq exploited here are
inductive reasoning over the number of administered doses, and parametric proof
over patient \ac{PK} parameters, establishing that the safety invariant holds
for any instantiation of the \vehicle parameters within the specified clinical
range. The \vehicle-verified property $\psi$ is imported as an axiom via
\vehicle's \rocq back-end and discharged as a hypothesis in the inductive step.

\section{Case Study and Methodology}
\label{section:casestudy}
In order to verify that the patient will not overdose as a result of a
controller, we use the formalisation of the human body presented by the
one-compartment pharmacokinetic
model~\cite{taleviOneCompartmentPharmacokineticModel2021}. This is a set of
\acp{ODE} that approximate the body's reaction to drugs, when the body is
considered to be one continuous volume of matter. Specifically, when an oral
dose $D$ is given, the concentration in the blood increases continuously as the
dose gets absorbed, and then after a period of time starts decreasing. The
following equation introduces a function for concentration at time $t$, modelled
using \emph{exponential decay}:
\begin{equation}\label{eq:single_dose}
  C(D,t)=\frac{D \cdot k_a}{\mathit{Vd} \cdot (k_a - k_e)}\cdot \left(e^{-k_e\cdot t} - e^{-k_a\cdot t}\right)
\end{equation}

\begin{figure}[t]
  \centering
  \resizebox{0.5\textwidth}{!}{
  \input{figures/concs.pgf}
  }
  \caption{The difference between the concentration from a single dose $C$, and
    the total concentration $\ct$.}
  \label{fig:concentration}
  % alt={A graph showing concentration from one dose and from two. The lines for
  % one dose and two are exactly the same, increasing but following exponential
  % decay, until a second dose is given at time , where the total concentration
  % starts increasing again, while the first dose is still decreasing.}
\end{figure}
\chreplaced[id=AS]{w}{W}here $k_a$ is the absorption constant, $k_e$ is the elimination constant, and
$\textit{Vd}$ is the volume of blood.

This response to a dose can be used to
  model a system with multiple doses at regular intervals, $\textit{ttd}$ (time
to dose), that can calculate an over-approximation of the concentration in the
patients blood at any point in time. $\ct$ is introduced as the total
concentration, and $\bar{D}$ is a list of length $n$ of the doses given, and the
difference of the functions can be seen in \autoref{fig:concentration}.

\begin{equation}
  \ct(\bar{D},t)=\sum_{i=0}^{n}(\text{max}(0, C(\bar{D}_i, t - ttd \cdot i)))
  \label{eq:sum_doses}
\end{equation}

Note that this is an over-approximation because it is the sum of the individual
doses concentration, and thus does not take into account the fact that the
body eliminates more aggressively for higher total concentration.
Also, \autoref{eq:sum_doses} is continuous and differentiable everywhere except
where $t$ is a multiple of $ttd$, where it is only continuous.

We must also define what information is required for accurate dosing, and what a
clinically viable patient is. The information used to make an accurate dosing
decision is the patient's current concentration $c$, temperature $T$, white
blood cell count $\textit{WBC}$, age and weight. We consider a patient is
clinically viable if they are between the ages of 18 and 89, weigh between
\qtylist{50;100}{kg}, have a $\textit{WBC}$ between
\qtylist{7.5;20}{\unit{\kilo\per\micro\litre}}, a temperature between
\qtylist{36.5;40}{\celsius} and the drug concentration in their blood is between
0 \unit{\milli\mole\per\litre} and $\csafe$ \unit{\milli\mole\per\litre} at the
start of treatment.

With this model, we can specify our safety property formally:
% \begin{theorem}[Main Safety Theorem]\label{thm:safety}
\begin{definition}[Main Safety Definition]\label{thm:safety}
  % For any clinically viable patient, for any valid sequence of doses, the
  % concentration of Vancomycin in the blood does not exceed the supratherapeutic
  % boundary, $\csafe$, at any time.
  For any clinically viable patient, a valid sequence of doses is a list of real
  numbers such that the total concentration of Vancomycin in the blood when
  given regular doses. where the size of the dose is determined by the list, does not
  exceed the supratheraputic boundary, $\csafe$, at any time.
  \begin{equation*}
    \exists \bar{D} : list\ \mathbb{R}, \forall t : \mathbb{R}, \ct(\bar{D}, t)\leq \csafe
  \end{equation*}
\end{definition}
% \end{theorem}
This poses the question:
\begin{center}\itshape
  How does one formalise this property?
\end{center}

To answer this, we first introduce an arbitrary function $f \colon \mathbb{R} \to \mathbb{R}$ that is
responsible for choosing a single dose based on the current concentration. This
function is applied at regular intervals, $\textit{ttd}$, and all doses given are
recorded in a list.
% is then used in a recursive function to calculate the list $\bar{D}$, of
% length $n + 1$, of doses chosen. $\textit{ttd}$ is also introduced as the time between
% doses, and is restricted to be greater than or equal to the time of peak
% concentration in $C$. This is so that whenever a new dose is given, $C$ must be
% non-increasing.
\begin{equation}\label{eq:doses}
  \begin{aligned}
    \textit{doses} \colon \mathbb{R} \rightarrow \mathbb{N} \to \textit{list } \mathbb{R}\\
    \textit{doses}(c, n)=\begin{cases}
      [f(c)] & n = 0\\
      \textit{doses}(c, n-1) \concat [\dblcolon f(\ct(\textit{doses}(c, n-1), \textit{ttd}\cdot n))] & n > 0
    \end{cases}
  \end{aligned}
\end{equation}
This definition allows for an updated theorem where $\bar{D}$ is instantiated with
$\mathit{doses}$.
% , and \emph{clinically viable} is defined as any patients who is not
% currently over-dosing $c\leq\csafe$.
\begin{theorem}[Main Safety Theorem]\label{thm:psi_safety}
  For any initial concentration $c$ less than or equal to $\csafe$, for an
  arbitrary amount of doses, and for any dosing interval $\textit{ttd}$ such
  that $\textit{ttd}$ is at least as long as the time for the concentration from
  the previous dose to peak, the \chadded[id=AS]{total} concentration at any
  point in time $t$ must be less than or equal to $\csafe$.
  \begin{equation*}
    \forall c : \mathbb{R}, \forall t : \mathbb{R}, \forall n : \mathbb{N}, c \leq \csafe \to\ct(\mathit{doses}(n, c), t) \leq \csafe
  \end{equation*}
\end{theorem}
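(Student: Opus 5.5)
The proof is an induction on the number of doses $n$, with the neural network property $\psi$ (the \vehicle-verified specification of $f$, imported into \rocq as an axiom) discharged in the inductive step. I take $\psi$ to say: for every clinically viable input with current concentration $c$ with $0\le c\le\csafe$, the chosen dose is non-negative, and
\[
c + C(f(c),t_{\mathit{peak}}) \le \csafe, \qquad t_{\mathit{peak}} = \frac{\ln k_a - \ln k_e}{k_a-k_e}.
\]
Here $C(f(c),t_{\mathit{peak}})$ is the maximum of the single-dose curve of \autoref{eq:single_dose}, and \vehicle's conditional branching handles the rational over-approximation of its exponential factor. The statement is strengthened so that the induction goes through. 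For every $n$:
\[
P(n):\quad \forall t,\ \ct(\mathit{doses}(c,n),t)\le\csafe, \qquad\text{and}\qquad \forall t,\ \ct(\mathit{doses}(c,n),t)\ge 0.
\]
The second conjunct is immediate from the $\max(0,\cdot)$ in \autoref{eq:sum_doses}. It is needed because the next dosing input $c_{n+1}=\ct(\mathit{doses}(c,n),\textit{ttd}\cdot(n+1))$ must lie in the domain on which $\psi$ holds.

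Before the induction I would establish three lemmas about the single-dose curve, all for $D\ge 0$ and $k_a>k_e>0$.
\begin{enumerate}
\item $C(D,t)\le 0$ for $t\le 0$, so every dose contributes nothing before it is given.
\item $C(D,\cdot)$ is bounded above by $C(D,t_{\mathit{peak}})$.
\item $C(D,\cdot)$ is non-increasing on $[t_{\mathit{peak}},\infty)$.
\end{enumerate}
These follow from the sign of the derivative $\frac{D k_a}{\mathit{Vd}(k_a-k_e)}\bigl(k_a e^{-k_a t}-k_e e^{-k_e t}\bigr)$, using the monotonicity lemmas and mean value theorem from MathComp Analysis. A corollary of (3) together with the hypothesis $\textit{ttd}\ge t_{\mathit{peak}}$ is that, for $t\ge\textit{ttd}\cdot m$, every dose given at an index $i<m$ has reached its peak. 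So every term of $\ct(\bar D,\cdot)$ with $i<m$ is non-increasing from time $\textit{ttd}\cdot m$ onward.

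The base case $n=0$ needs only (1) and (2): $\ct([f(c)],t)\le C(f(c),t_{\mathit{peak}})\le \csafe-c\le\csafe$, where the middle inequality is $\psi$ and the last uses $c\ge 0$.

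For the step, write $\mathit{doses}(c,n+1)=\mathit{doses}(c,n)\concat[f(c_{n+1})]$ and split $\ct$ over the concatenation into the old sum plus the term for index $n+1$. The induction hypothesis gives $0\le c_{n+1}\le\csafe$, so $\psi$ applies at $c_{n+1}$. Then split on $t$.
\begin{itemize}
\item If $t\le\textit{ttd}\cdot(n+1)$, the new term is $0$ by (1), and the bound is exactly the induction hypothesis.
\item If $t\ge\textit{ttd}\cdot(n+1)$, each old term is non-increasing on $[\textit{ttd}\cdot(n+1),t]$ by the corollary, so the old sum is at most $c_{n+1}$. The new term is at most $C(f(c_{n+1}),t_{\mathit{peak}})$ by (2). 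Hence $\ct\le c_{n+1}+C(f(c_{n+1}),t_{\mathit{peak}})\le\csafe$ by $\psi$.
\end{itemize}
Two points need care here. The old terms involve $\max(0,\cdot)$, but the pointwise maximum of $0$ and a non-increasing function is still non-increasing, so the corollary applies to them. Also, the indexing in \autoref{eq:doses} must line up with the time offsets $\textit{ttd}\cdot i$ in \autoref{eq:sum_doses}, which I would state as a lemma relating the length of $\mathit{doses}(c,n)$ to $n$.

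I expect the main obstacle to be the interface between $\psi$ and the real-valued statement. The exponentials in $C(\cdot,t_{\mathit{peak}})$ are irrational, so $\psi$ can only mention rational bounds on $e^{-k_e t_{\mathit{peak}}}-e^{-k_a t_{\mathit{peak}}}$. I would first try to prove in \rocq that the constants \vehicle instantiates are genuine upper bounds, using MathComp's exponential inequalities. The branch in $\psi$ that handles the sign of $k_a-k_e$ would then be discharged by case analysis.

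A second, more conceptual issue is that $\ct$ does not itself include the pre-existing concentration $c$, so the theorem as literally stated is weaker than the clinical intent. I would prove the version above and note that adding a decaying term $c\,e^{-k_e t}$ to $\ct$ fits the same argument, since that term is non-increasing from time $0$.
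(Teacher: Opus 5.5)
Your proposal is correct and follows essentially the same route as the paper. Both argue by induction on $n$, carrying the non-negativity conjunct, which the \rocq statement \texttt{doses\_safe} also includes. Both split $t$ against the newest dosing time and note the new dose contributes nothing before it is given. Both use $\textit{ttd}\ge t_{\mathit{peak}}$ to make the older doses non-increasing, bounding the old sum by the concentration $c'$ at dosing time, and then apply $\psi$ at $c'$.

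One small point: your single-dose lemmas are stated only for $k_a>k_e$, whereas the paper assumes just $k_a\neq k_e$. The case $k_a<k_e$ follows immediately, because $\frac{e^{-k_e t}-e^{-k_a t}}{k_a-k_e}$ and $t_{\mathit{peak}}$ are both symmetric in $k_a$ and $k_e$.
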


This introduces our next question:
\begin{center}\itshape
  what property $\psi$ of $f$ is sufficient to
prove \autoref{thm:psi_safety}?
\end{center}

To derive $\psi$, we consider the point in time where the decision is being made by
$f$. The current concentration is known to be $c$, and $c$ is less than
$\csafe$. Thus, as long as the network proposes a dose such that concentration
will not exceed $\csafe - c$, it is a safe dose. Furthermore, this
constitutes an invariant across an infinite time horizon. This can be expressed
as follows:
\begin{equation}\label{eq:psi}
  \forall c : \mathbb{R}, \forall t : \mathbb{R}, c\leq\csafe\to c + C(f(c), t)\leq\csafe
\end{equation}
Although a property of $f$, this is can be expanded to a property linear with
respect to $f$, a requirement for a later step. To do so, we
note that $t$ where $\frac{dC}{dt}=0$ is the peak concentration, and
therefore $C(f(c), t) \leq C\left(f(c), \frac{\text{ln}\left(k_a/ k_e\right)}{k_a-k_e}\right)$.
With this, $C$ can both be fully instantiated and unfolded to get a $\psi$ that is
sufficient:
% we note that $C$'s second derivative is strictly
% \replaced[id=AS]{non-positive}{decreasing} with respect to time, and thus
% $\frac{dC}{dt}=0$ is the peak concentration, so our property $\psi$ reveals
% itself.\comment[id=AS]{This is not at all obvious. Firstly, $t$ should be $t$
%   when $\frac{dC}{dt}=0$, and this should be explained}
% \comment[id=AS]{This equation is simply wrong :(}
\begin{equation}\label{eq:f_prop}
  \psi(f) = \forall c : \mathbb{R}, c\leq\csafe\to c+ \frac{f(c) \cdot k_a}{\mathit{Vd} \cdot (k_a - k_e)}\cdot \left(e^{-k_e\cdot \frac{\text{ln}\left(\frac{k_a}{k_e}\right)}{k_a-k_e}} - e^{-k_a\cdot \frac{\text{ln}\left(\frac{k_a}{k_e}\right)}{k_a-k_e}}\right)
  \leq \csafe
\end{equation}
\textit{Note that $k_a\neq k_e$, otherwise there is division by 0.}

This equation for $\psi$ remains sufficient, as the steps taken to arrive at it
were all simply transitivity of less than or equal to. This form is fully
expanded and is linear with respect to $f$ -- that is, neither any
of the inputs or outputs of the network interact with each other in a non-linear
way. This leads to the penultimate question:
\begin{center}\itshape
How does one verify this when
$f$ is a neural network?
\end{center}

The first thing to note is the dimensionality of $f$. It is simply a function
from $\mathbb{R}\to\mathbb{R}$. As described above, this is not what we expected
from our network, as it is expected to know about age, weight\chadded[id=AS]{,} etc. also. This is
because the \chreplaced[id=AS]{theorem}{verification} only relies on the
concentration, $f$ can be a \emph{partially applied} network. \chreplaced[id=AS]{The
  partially applied network is enough to reason about the theorem with, and will
  simplify any proofs}{This is because,
concentration, the target of our verification, has no dependence on the other
inputs and thus can be ignored}.\chadded[id=AS]{ The neural network still cares
  about these inputs, because they are used to determine how unwell a patient is,
  and therefore how much to treat them.} In the case \chreplaced[id=AS]{where
  the theorem relies on more inputs}{it did}, these would also need to be
reasoned upon\chdeleted[id=AS]{, but the parameters that do affect
  concentration are baked into the constants $k_e, k_a, \textit{Vd}$}. In our
example\chadded[id=AS]{, the partially applied network can be formulated as
  such}:
\begin{align*}
  \network : \mathbb{R}^5\to\mathbb{R}\\
  f(c) = \network(c, T, \textit{WBC}, \mathit{age}, \mathit{weight})
\end{align*}
Where $T, \mathit{WBC}, \mathit{age}, \mathit{weight}$ are all fixed based upon
the initial patient state. Although $t$ and $\textit{WBC}$ do actually vary with
time, the property $\psi$ is only with respect to concentration and avoiding a
supratherapeutic concentration, for which these are irrelevant, so the variation
does not impede the safety of the system.

Now we can express $f$ in terms of $\network$, \vehicle is used to check
\autoref{eq:f_prop}. \vehicle requires all properties that are going to be
verified to be linear, hence the earlier effort spent ensuring that $\psi$ can be
expressed as such. This tool is used to specify and verify \autoref{eq:f_prop}
against networks to check if they meet the safety requirements.
However, $e^{-k_e\cdot \hdots}$ and $e^{-k_a\cdot \hdots}$ are potentially
irrational numbers and thus are neither computable or expressible in \vehicle.
To circumvent this issue, we introduce over and under approximations for
$e^{-k_a\cdot \hdots}$ and $e^{-k_e\cdot \hdots}$, and substitute these in place of the
potentially irrational components, which are calculated externally and passed
into \vehicle through the \ac{CLI}. The reason that so many approximations are
required is due to how $k_e$ and $k_a$ are ordered. If $k_e<k_a$:
\begin{align*}
  \frac{D\cdot k_a}{\textit{Vd}\cdot(k_a-k_e)}<0 &&e^{-k_e\cdot \hdots}-e^{-k_a\cdot \hdots}<0
\end{align*}
else $k_a<k_e$:
\begin{align*}
  \frac{D\cdot k_a}{\textit{Vd}\cdot(k_a-k_e)}>0 &&e^{-k_e\cdot \hdots}-e^{-k_a\cdot \hdots}>0
\end{align*}
Therefore, if $k_e<k_a$, then to over approximate $C$, one must under approximate
$e^{-k_e\cdot\hdots}-e^{-k_a\cdot\hdots}$, and thus under approximate $e^{-k_e\cdot\hdots}$ and
over approximate $e^{-k_a\cdot\hdots}$. However, if $k_a<k_e$, then the direction of
all approximates must flip.

% Because $k_a$ may be
% smaller or greater than $k_e$, it is unknown weather or not $e^{-k_e\cdot t}$ or
% $e^{-k_a\cdot t}$ should be under or over approximated, as the polarity of the
% fraction and bracket in~\autoref{eq:f_prop} changes, and thus the direction of
% the approximation changes with it.

With this, the final question is posed:
\begin{center}\itshape
  How does one train a neural network that meets the specification?
\end{center}

Training a neural network must be done in tandem with the development of a
specification for verification. First, data must be generated to provide input
and output to the network, which exposes the neural network to boundary
examples, i.e. the threshold values within the requirements, such as maximum
concentration limits. Second, a technique called property-driven training~\cite{flinkow2025generalframeworkpropertydrivenmachine}, can
be implemented - the loss function is adapted to contain the specification,
hardening the neural network against the constraints. However, this is not
covered in this paper.

To summarise the methodology, it is equivalent to answering these questions:
\begin{enumerate}\itshape
\item How do we formalise the system?
\item What is our main safety property?
\item What property of an opaque function would implicate the main safety property?
\item How can we train a neural network to meet that property?
\end{enumerate}

There was also an implicit decision about how to split what belongs to \vehicle
and what to \rocq. As the system is inherently non-linear, the \vehicle
component must somehow become linear. In this case study, we found that the
property $\psi$ is in fact linear, and thus all the non-linear system
modelling can be off-loaded to \rocq.

\section{Network and Training}
Motivated by antibiotic dosing, a neural network controller which could act as a
test-bed for the infinite-horizon properties, was generated. A feed forward
multi-layer perceptron (MLP) was trained on synthetic patient state and dosing
data\chadded[id=AS]{. }% \deleted[id=AS]{, with property-driven training to enforce these constraints within the loss
%function}.
In this work, the patient is modelled using a low-dimensional state
consisting of drug concentration (C), body temperature (T), white blood cell
count (WBC), and static demographic variables (age, weight). The controller
outputs a discrete antibiotic dose administered at 12 hour intervals, reflective
of typical vancomycin dosing regimes. Between dosing events, the patient state
evolves continuously according to pharmacokinetic and pharmacodynamic equations.
The aim is to dose when the patient is unhealthy, without reaching unsafe
states, and not to dose when the patient is healthy.

\subsection{Data Generation}\label{sec:data_gen}

To train the neural network controller, we first constructed a synthetic,
clinician-style dataset using a custom pharmacokinetic/pharmacodynamic (PK/PD)
simulation. Synthetic train and test data were generated by forward simulation
of the PK/PD model under a rule-based feedback dosing controller over a 24 day
period. Euler updates for each state variable, alongside intermittent dose rate
adjustments, occurring at intervals of 12 hours. Data is generated for a cohort
of 50 patients, featuring randomly sampled ages between 18 and 90 years, weighs
between 50.0 and 100.0 kg, and binary biological sex assignments. State-action
pairs—comprising concentration, temperature, white blood cell count, age,
weight, and the resulting target dose—are recorded at every 12 hour time-step.
\chdeleted[id=FC]{The resulting dataset is randomly partitioned into an 80/20 training and testing
split.} \chadded[id=FC]{While synethic data was used,} \chdeleted[id=FC]{The split was used with the synthetic data so that}real data could be
easily substituted. To create more realistic patient data, coefficients of the
PK/PD equations are impacted by age, weight etc.

\chadded[id=FC]{Body temperature $T$ follows first-order dynamics capturing homeostasis, infection-driven increase, and drug-mediated reduction}:

\begin{equation}\label{eq:temp_update}\frac{dT}{dt} = r_T(T) - k_T^C C - k_T^{hom}(T - T_{norm})\end{equation}

% \comment[id=MR]{You tell me that you've done a load of fancy modelling to generate the data that trains the network. Leaving the aside the obvious circularity in using a *model* to train a *model* that you then verify against a *model*, I'm overall not sure why this matters for the case study.}
% \comment[id=AS]{NN is probably easier to run than rthe model, so for
%   exportability this is better. also verification is still generic so its
%   somewhat irrelevant}
where $r_T(T)$ is the infection-driven temperature increase, $k_T^C$ is the
drug temperature coupling coefficient, $k_T^{hom}$ is the homeostatic return rate,
and $T_{norm}$ is the normal body temperature.
% \chcomment[id=AS]{Missing introductory text here}

\chadded[id=FC]{Similarly, the change of white blood cell count $WBC$ with respect to time}:
\begin{equation}\label{eq:wbc_update}\frac{dWBC}{dt} = r_{WBC}(WBC) - k_{WBC}^C C - k_{WBC}^{\text{hom}}(WBC - WBC_{\text{norm}})
\end{equation}

where $r_{WBC}(WBC)$ is the infection-driven WBC production term, $k_{WBC}^C$ is
the drug WBC coupling coefficient, $k_{WBC}^{\text{hom}}$ is the homeostatic decay
rate, and $WBC_{\text{norm}}$ is the healthy baseline WBC count.

At each time-step, the simulation calculates a heuristic clinical dose $D_t$
based on the patient's vitals and current drug concentration. To maintain
physiological plausibility, the updated temperature and white blood cell counts
are strictly constrained within predefined clinical limits using hard state
clipping.

The pharmacokinetic (PK) model assumed oral drug administration with first-order absorption and elimination. Following each administered dose $D_i$, the resulting drug concentration is governed by a standard one-compartment model as described in Equation \ref{eq:single_dose}. As in Equation \ref{eq:sum_doses} \footnote{$V_{d, \mathrm{eff}}$ is computed here, distinct from
the baseline population parameter $\textit{Vd}$, as it is individually
adjusted based on the simulated patient's age and weight covariates. While
fine for data generation and training, it cannot be used in $\psi$ as it would
make the property non-linear.}, the total concentration at time $t$ is computed as the sum of contributions from all previously administered doses.

\subsection{Training}

The dosing controller is implemented as an MLP \cite{GARDNER19982627} using the
Keras Sequential API. The network architecture consists of an input layer for
the 5-dimensional state vector, followed by two hidden layers containing 128 and
64 neurons, respectively. \chadded[id=AS, comment={1, 3, maybe 4. Maybe move to
  ``limitations``}]{A small network has been chosen as it is not meant to be
  performant, instead it is to show that a network can be trained that holds the
  property $\psi$. As the methodology relies \chadded[id=FC]{on} neural network solvers, the
  limitations on size/complexity of operators is limited only by the solver
  chosen. In this case, Marabou is used, which can handle networks that are as
  large as $n$ nodes, but is limited to only ReLU operators.} Both hidden layers
utilise Rectified Linear Unit (ReLU)\cite{DBLP:journals/corr/abs-1803-08375}
activation functions to capture the non-linear relationships between
physiological state variables and dosing decisions. The final output layer
consists of a single neuron that also applies a ReLU activation, to ensure
non-negative dose predictions. A small positive offset is introduced during the
model export to guarantee this strict positivity for formal verification. During
the baseline supervised phase, the model is trained using the Adam optimiser
\cite{kingma2017adammethodstochasticoptimization} to minimise Mean Squared Error
(MSE). The optimisation is performed over 50 epochs with a batch size of 32. The
dataset is split into training and test sets using an 80/20 split.

\section{Verifying Neural Networks with \vehicle}\label{sec:vehicle}
In order to verify the property $\psi$ specified in~\autoref{eq:f_prop}, we use
\vehicle. This allows the checking of any expressible linear assertion. We
initially specify the types and assign meaningful names to the indices of the
input domain\footnote{The ``;`` syntax to simulate new lines is not syntactically valid in
  \vehicle, it is only used here for brevity.}.
\begin{minted}[fontsize=\listingsize]{vehicle}
type UnnormalisedInputVector = Tensor Real [5]
type InputVector = Tensor Real [5]

conc = 0; temp = 1; wbc = 2; age = 3; weight = 4

type OutputVector = Tensor Real [1]
\end{minted}
After this, normalisation is defined, as the network was normalised in training,
the inputs described later are meaningful and then normalised. The normalisation
vectors are created from the data and saved to files that are passed to
\vehicle's \ac{CLI}, and are expressed explicitly as a transform in \vehicle.
\begin{minted}[fontsize=\listingsize]{Vehicle}
@dataset
meanScalingValues : Tensor Real [5]

@dataset
standardDeviationValues : Tensor Real [5]

normalise : UnnormalisedInputVector -> InputVector
normalise x = foreach i .
  (x ! i - meanScalingValues ! i) / (standardDeviationValues ! i)
\end{minted}
The neural network, \texttt{pk}, is expressed as mapping from the
\texttt{InputVector} to the \texttt{OutputVector}. The properties of interest for verification,
however, relate to the mapping, \texttt{normpk}, between the \texttt{UnnormalisedInputVector}
and the \texttt{OutputVector}. This allows verification over the patient specific
information that reflect realistic values, while still maintaining normalisation.
It also takes all the
parameters needed to construct a \ac{PK} model to verify against. This is how it
manages to be generic over each patient, as this information is patient
specific\footnote{The ``,`` syntax is not valid, but used here for brevity. Refer
  to~\aref{appendix:lst:vcl_spec} for the correct version.}.
\begin{minted}[fontsize=\listingsize]{Vehicle}
@network
pk : InputVector -> OutputVector

normpk : UnnormalisedInputVector -> OutputVector
normpk x = pk (normalise x)

@parameter
Ka, Ke, Vd, C_safe, ttd, Ka_over, Ka_under, Ke_over, Ka_under : Real
\end{minted}
Compile-time constant folding produces a linear specification by instantiating
the parameters with the given values. These
parameters are imperative for the proof, as the specification should be
generic and the only source of truth relied upon by the proof, they must be
encoded here. For example the fact that $k_a$ must not equal $k_e$:
\begin{minted}[fontsize=\listingsize]{Vehicle}
@property
Ke_n_Ka : Bool
Ke_n_Ka = Ka != Ke
\end{minted}
As these are known at compile time, it is an easy check that does not require an
SMT solver to verify this or similar properties. All of these properties about
parameters can be found along in the full specification file in
\aref{appendix:lst:vcl_spec}.

The approximations \texttt{Ke\_under}, \texttt{Ke\_over}, \texttt{Ka\_under} and \texttt{Ka\_over} are
introduced similarly but are \emph{not} checked by \vehicle. Instead, it is left to the chosen
\ac{ITP} to verify:
\begin{align*}
  \texttt{Ke\_under}\leq e^{-k_e\cdot\frac{\text{ln}\left(\frac{k_a}{k_e}\right)}{k_a-k_e}}\leq\texttt{Ke\_over}\\
  \texttt{Ka\_under}\leq e^{-k_a\cdot\frac{\text{ln}\left(\frac{k_a}{k_e}\right)}{k_a-k_e}}\leq\texttt{Ka\_over}
\end{align*}

\autoref{thm:safety} described what counts as a clinically viable patient, in
\vehicle a corresponding hyper-rectangle input sub-domain can be defined.
\begin{minted}[fontsize=\listingsize]{Vehicle}
safeInput : InputVector -> Bool
safeInput x =
    0 <= x ! conc <= C_safe and
    36.5 <= x ! temp <= 40 and
    7.5 <= x ! wbc <= 20 and
    18 <= x ! age <= 89 and
    50 <= x ! weight <= 100
\end{minted}
Finally, the output domain restriction can be specified based
on~\autoref{eq:f_prop}, along side the the property \vehicle
verifies.
\begin{minted}[fontsize=\listingsize]{Vehicle}
safeOutput : InputVector -> Bool
safeOutput x = let y = ((((normpk x) ! 0) * Ka) / (Vd * (Ka - Ke))) in
           if Ka < Ke
           then (x ! conc) + y * (Ke_under - Ka_over) <= C_safe
           else (x ! conc) + y * (Ke_over - Ka_under) <= C_safe

@property
safe : Bool
safe = forall x . safeInput x => safeOutput x
\end{minted}
This property is sufficient for~\autoref{eq:f_prop} and can not only
be tested by \vehicle, but can also be extracted to an Interactive Theorem
Prover for system verification.

\vehicle also verifies a non-negativity property. This is obviously true as it
is encoded in the neural network that was trained, however, as this
specification should be generic over \emph{any} network, and \rocq needs knowledge
of this fact, it is included in the specification.
\begin{minted}[fontsize=\listingsize]{Vehicle}
@property
nonNeg : Bool
nonNeg = forall x . safeInput x => 0 <= (normpk x) ! 0
\end{minted}

\section{Verifying System Dynamics with \rocq}
In order to show that~\autoref{eq:f_prop} is sufficient to
prove~\autoref{thm:psi_safety}, we formalise the theorem in \rocq and use the
\vehicle extraction to have a consistent proof system. To formalise the system,
we implement~\autoref{eq:single_dose} and \autoref{eq:sum_doses}, respectively.

\begin{minted}[fontsize=\small]{coq}
Definition Concentration (D t : R) : R :=
  ((D * Ka) / (Vd * (Ka - Ke))) * (expR ((-Ke) * t) - expR ((-Ka) * t)).
\end{minted}
\begin{minted}[fontsize=\small]{coq}
Definition total_conc {n} (Ds : n.-tuple R) : R -> R :=
  \sum_(i < n) ((cst 0) \max (Concentration (tnth Ds i))
            \o (center (ttd * i%:R))).
\end{minted}
(Here, cst is the constant function, \texttt{tnth Ds i} returns the $i$th element of
Ds, and center$(x, y)=y-x$).
% (Here, \texttt{cst} is $\lambda x \rightarrow \lambda y \rightarrow x$, \texttt{f \textbackslash max g} is
% $\lambda x\rightarrow \text{max}(f(x), g(x)), $\texttt{tnth Ds i} is the $i$th
% element of \texttt{Ds}, \texttt{center} is
% $\lambda x \rightarrow \lambda y \rightarrow y - x$ and \texttt{\textbackslash
%   o} is function composition).

With these definitions, we can introduce an opaque representation of the reduced
network, along with the properties and definition that are sufficient to prove the main
theorem.
\begin{minted}[fontsize=\listingsize]{coq}
Context {network : R -> R}.

Hypothesis safe : forall C : R, 0 <= C <= C_safe ->
  C + (Concentration (network C) dCdt_root) <= C_safe.

Hypothesis non_neg : forall C : R, 0 <= C <= C_safe -> 0 <= network C.

Fixpoint n_doses (initial : R) (n : nat) : n.+1.-tuple R :=
  match n with
  | 0 => [:: network initial]
  | n'.+1 =>
      let Doses := n_doses initial n' in
      rcons Doses (network (total_conc Doses (ttd *+ (n'.+1))%R))
  end.
\end{minted}
Note that \texttt{safe} and \texttt{non\_neg} are exactly the properties laid out in the
\vehicle specification file, except they act upon the reduced \texttt{network}.

Finally, \autoref{thm:psi_safety} can be expressed in \rocq.
\begin{minted}[fontsize=\listingsize]{coq}
Theorem doses_safe (n : nat) (initial t : R) (HC : 0 <= initial <= C_safe) :
  0 <= total_conc (n_doses initial n) t <= C_safe.
\end{minted}

This theorem is proven in \rocq, and can be found in
\cite{Sirman_Vancomycert_Rocq_Files_2026}.
A proof using the same notation as~\autoref{thm:psi_safety} is provided
below:
\begin{proof}
  Because the $\bar{D}$ is derived from $doses$, its length is $n+1$. We do induction
  on $n$ on the case where $\bar{D}$ has only 1 element, and when it
  has multiple.

  In the case where $n=0$, and $\bar{D}=[f(c)]$:
  \begin{align*}
    c \leq \csafe \rightarrow \ct(doses(0, c), t)\leq \csafe & & \text{Substitution}\\
    c\leq \csafe\rightarrow \ct([f(c)], t)\leq \csafe & & \text{Definition of }doses\\
    c\leq \csafe\rightarrow \text{max}(0, C(f(c), t))\leq \csafe & & \text{Definition of }\ct
  \end{align*}
  We consider the case
  $C(f(c),t)<0$. Clearly, if $C(f(c),t)<0$, then
  $\text{max}(0,C(f(c),t))=0\leq \csafe$.
  In the other case $C(f(c), t)\geq 0$:
  \begin{equation*}
    c\leq\csafe\rightarrow C(f(c), t)\leq\csafe
  \end{equation*}
  As $c$ is non-negative, this can be weakened:
  \begin{equation*}
    c\leq\csafe\rightarrow c + C(f(c), t)\leq\csafe
  \end{equation*}
  We have derived exactly~\autoref{eq:psi}!

  For the inductive case $n=m+1$, $\bar{D}=d::x::xs$, we proceed by cases on $t$
  and $\textit{ttd}\cdot n$. We also remove the assumption $c\leq\csafe$
  for brevity, as it is not required for this portion of the proof. If
  $t\leq\textit{ttd}\cdot n$, then the newest dose has not been given yet or it has only
  been given in that instant and had no effect. It
  follows directly from the inductive hypothesis. Exactly:
  \begin{align*}
    \ct(d::x::xs, t)\leq\csafe & & \\
    \ct(x::xs, t) + \text{max}(0, C(d, t-\textit{ttd}\cdot n))\leq\csafe & & \text{Definition of }\ct\\
    \ct(x::xs, t) + 0 \leq\csafe & &\text{Resolution of max}
  \end{align*}
  In the case $t < \textit{ttd}\cdot n$, the dose has had time to have some effect.
  \begin{align*}
    \ct(d::x::xs, t)\leq\csafe & & \\
    \ct(x::xs, t) + \text{max}(0, C(d, t-\textit{ttd}\cdot n))\leq\csafe & & \text{Definition of }\ct\\
    \ct(x::xs, t) + C(d, t-\textit{ttd}\cdot n)\leq\csafe & & \text{Resolution of max}
  \end{align*}
  Because $\ct(x::xs, t)$ was given at least $\textit{ttd}$ units of time ago, and
  because $\textit{ttd}$ is greater than or equal to the root of the first derivative, it must
  be non-increasing, thus $\ct(x::xs, t)\leq\ct(x::xs, \textit{ttd}\cdot n)$:
  \begin{equation*}
    \ct(x::xs, \textit{ttd}\cdot n) + C(d, t-\textit{ttd}\cdot n)\leq\csafe
  \end{equation*}
  We know based on the definition in~\autoref{thm:psi_safety}, $\bar{D}$ is produced by
  $\textit{doses}$, so $d$ being the head of $\bar{D}$ implies:
  \begin{align*}
    d=f(\ct(x::xs, \textit{ttd}\cdot n))\\
    \ct(x::xs, \textit{ttd}\cdot n)+C(f(\ct(x::xs, \textit{ttd}\cdot n)), t-\textit{ttd}\cdot n)\leq\csafe
  \end{align*}
  For simplicity, we let $c'$ be the concentration at the time the $n$th dose is
  given, $c'=\ct(x::xs, \textit{ttd}\cdot n)$.
  \begin{equation*}
    c'+C(f(c'), t-\textit{ttd}\cdot n)\leq\csafe
  \end{equation*}
  Again, this fits perfectly with~\autoref{eq:psi}, where $t=t-\textit{ttd}\cdot n$ and $c=c'$!
  Therefore,~\autoref{eq:psi} is sufficient to prove~\autoref{thm:psi_safety}.
\end{proof}

\section{Related Work}
\label{section:relatedworks}

\subsubsection{Neural Network Verification Tools and Benchmarks}
\ac{NNV} has seen significant growth over the past decade. Reluplex
\cite{katz_reluplex_2017}, an extension of the Simplex method adapted to handle
\ac{ReLU} activations, was among the first tools to provide sound and complete
verification for deep networks. The field has since seen rapid progress in bound
propagation methods, with $\beta$-CROWN \cite{wang_beta-crown_2021} establishing
state-of-the-art performance on robustness benchmarks by combining per-neuron
split constraints with efficient bound propagation, consistently placing first
in the international VNN-COMP competition \cite{kaulen_6th_2025}, which provides
standardised benchmarks and evaluation infrastructure for the community.
Open-loop tools verify network properties in isolation and cannot establish that
a closed-loop control system remains safe over time. Closed-loop verifiers such
as NNV \cite{tran_nnv_2020} and VeriSig 2.0 \cite{ivanov_verisig_2021} address
this by propagating reachable sets through combined network-plant dynamics, but
are fundamentally limited to finite time horizons: approximation errors compound
over successive steps, making unbounded-horizon guarantees inaccessible.

\subsubsection{Infinite-Horizon and Closed-Loop Safety}

Beyond finite-horizon reachability, a complementary line of 
work has pursued forward invariance through barrier certificates. 
\acp{CBF}, introduced for \ac{ODE} systems by 
Ames et al.~\cite{ames_control_2019}, provide computationally efficient 
online safety filters by reducing the invariance condition to a 
quadratic program solved at each time-step. Dawson et al.~\cite{dawson_safe_2023} 
provide a comprehensive survey of neural barrier certificates 
and contraction metrics, covering the fundamental tension between 
the expressiveness of learned certificates and the difficulty of 
formally verifying their correctness. Recent work has extended 
this framework with tools such as FOSSIL \cite{abate_fossil_2021}, 
enabling formal safety guarantees for learned controllers. 
Zhang et al.~\cite{zhang_verification_2026} extend this further to \ac{DAE} 
systems, providing \ac{SMT}-based verification for 
polynomial and neural network barrier candidates, targeting 
cyber-physical systems such as power networks and robotic 
manipulators.

A particularly relevant line of work draws on \ac{dL}
\cite{DBLP:journals/jar/Platzer17}, implemented in the KeYmaera X proof assistant,
which supports infinite-time safety invariants for control envelopes where
governing differential equations admit no closed-form solution. Teuber et
al.~\cite{teuber_provably_2024} build on this with VerSAILLE, which bridges
\ac{dL} contracts and open-loop \ac{NNV}: given a control envelope proven safe
in KeYmaera X, they derive a specification for the neural network controller and
show that any network satisfying this specification inherits the infinite-time
safety guarantee, demonstrated on airborne collision avoidance benchmarks
\cite{katz_reluplex_2017,johnson_arch-comp21_2021}. This represents the closest
prior work in spirit to \textsc{Vancomycert}: both decompose the verification
problem into a tractable \ac{NNV} property and a separate proof in a formal
system. Where the approaches diverge is that VerSAILLE derives specifications
from an existing \ac{dL} contract, whereas \textsc{Vancomycert} constructs its
invariant proof from first principles over a \ac{PK}/\ac{PD} \ac{ODE} system,
with \rocq proving that the linearisation is sound and handling all non-linear
reasoning within the proof assistant.

\subsubsection{ITP Integration and Proof Certificates}
\label{sec:certs}
A central challenge in deploying formally verified neural network 
controllers is establishing a trusted chain of evidence from the 
automated verifier to the proof assistant. As noted in 
Section~\ref{section:background}, \vehicle currently produces opaque 
axioms rather than checkable proof certificates. The Sledgehammer 
framework for Isabelle demonstrates one resolution: external 
\ac{ATP} and \ac{SMT} solvers discharge proof obligations whose 
results are reconstructed inside Isabelle's trusted kernel, with 
recent work extending this to the cvc5 solver via the Alethe 
certificate format \cite{majumdar_smtcoq_2017}. The analogous 
infrastructure for \rocq is provided by SMTCoq 
\cite{majumdar_smtcoq_2017}, which replays \ac{SMT} solver 
results inside Coq with a verified checker rather than accepting 
them as axioms, representing the direction future \vehicle 
back-ends could take to close the certificate gap identified in 
this paper.

\section{Discussion}
This case study is applicable to \emph{any} oral pharmaceutical, as a different drug
affects only the constants, not the general shape of the function. The fact that
is generic over patients means a \ac{MIPD} \cite{poweleitArtificialIntelligenceMachine2023} (a system for dosing
patients based on their specific circumstance, instead of for the general
population) could be integrated to potentially see better results.

This also showcases a way to circumvent current limitations of neural network
solvers, specifically the linearity constraint and lack of infinite-horizon
guarantees. By integrating with an \ac{ITP}, it allows us to offload the non-linear
and infinite-horizon reasoning to a powerful theorem prover that does not suffer
from such limitations, whilst still being able to verify properties of the
networks that are required.

This specific example was made possible by finding a linear invariant that
implies our over-arching property. As provers improve, and as more support
VNNLIB, a standard query format for neural network
solvers~\cite{roy2026vnnlib20rigorousfoundations}, less restrictive provers will be
able to interface with Vehicle, and the limitations on what can be proven will
be relaxed.
\subsection{Limitations}
As mentioned in Section \ref{sec:certs}, \vehicle does not produce a full proof
with a certificate, leading to potential issues in the extraction mechanism to
propagate and invalidate the proof. Furthermore, as SMT-solvers are generally
not proven to be sound, a bug in the SMT-solver could also invalidate the proof.
The limitations of neural network solvers still reduces what can be feasibly
verified, as not all systems have a linear property of the network that is
sufficient.

As with any use of both machine learning and \ac{ITP}s, a significant time
investment is required to train the model, derive properties like $\psi$ and
formally prove them, restricting the appeal for industrial applications where
there is no safety-critical property that must be verified. With this in mind,
the neural network that was trained has not been evaluated, as its purpose was
to show that \vehicle could be used to train a network, alongside traditional
data-driven training, that meets the specification, and not to produce an
optimal network. Training bigger networks is possible, with Marabou currently
able to handle up to 68 million nodes~\cite{brix2024fifthinternationalverificationneural}.
\subsection{Future Directions}
We believe that the methodology laid out here for finding and proving the
invariant is generic over a much bigger class of problems. In fact, we believe
that any set of differential equations that are continuous and differentiable
almost everywhere across an infinite-horizon, and has a global peak, should be
solvable in a similar manner. It is our hope to generalise the \rocq portion
over this class so that further work on these problems requires minimal work on
the reasoning side, and is applicable to a host of both research and industrial
areas. For instance, in this case, the neural network is trained on a
pre-established dosing policy. However, a more relevant use case would be to
verify an offline RL trained neural network for optimal patient outcomes via
antibiotic dosing. This could be extended to other regimes, such as vasopressor
management \cite{komorowski2018}.

We also hope to bridge the gap between the proof systems of \vehicle and \ac{ITP}s,
by formalising a proof-certificate language in an \ac{ITP}, and allowing \vehicle to
give full proof certificates instead of axioms.

% \input{sections/maths.tex}
% \input{sections/results.tex}

% TODO: messed up citation
\appendix
\section{Vehicle Specification}
\label{appendix:lst:vcl_spec}
\begin{minted}[fontsize=\listingsize]{vehicle}
type UnnormalisedInputVector = Tensor Real [5]
type InputVector = Tensor Real [5]

conc = 0; temp = 1; wbc = 2; age = 3; weight = 4

type OutputVector = Tensor Real [1]

@dataset
meanScalingValues : Tensor Real [5]

@dataset
standardDeviationValues : Tensor Real [5]

normalise : UnnormalisedInputVector -> InputVector
normalise x = foreach i .
  (x ! i - meanScalingValues ! i) / (standardDeviationValues ! i)

@network
pk : InputVector -> OutputVector

normpk : UnnormalisedInputVector -> OutputVector
normpk x = pk (normalise x)

@parameter
Ka, Ke, Vd, C_safe, ttd, Ka_over, Ka_under, Ke_over, Ka_under, eps : Real

@property
Ka_pos : Bool
Ka_pos = 0 < Ka

@property
Ke_pos : Bool
Ke_pos = 0 < Ke

@property
Ke_n_Ka : Bool
Ke_n_Ka = Ka != Ke

@property
Vd_pos : Bool
Vd_pos = 0 < Vd

@property
C_safe_pos : Bool
C_safe_pos = 0 < C_safe

@property
ttd_pos : Bool
ttd_pos = 0 < ttd

safeFarInput : InputVector -> Bool
safeFarInput x =
    0 <= x ! conc <= C_safe * 0.99 and
    36.5 <= x ! temp <= 40 and
    7.5 <= x ! wbc <= 20 and
    18 <= x ! age <= 89 and
    50 <= x ! weight <= 100

safeFarOutput : InputVector -> Bool
safeFarOutput x = let y = ((((normpk x) ! 0) * Ka) / (Vd * (Ka - Ke))) in
           if Ka < Ke
           then (x ! conc) + y * (Ke_under - Ka_over) < C_safe
           else (x ! conc) + y * (Ke_over - Ka_under) < C_safe

@property
safeFar : Bool
safeFar = forall x . safeFarInput x => safeFarOutput x

safeNearInput : InputVector -> Bool
safeNearInput x =
    C_safe * 0.99 <= x ! conc <= C_safe and
    36.5 <= x ! temp <= 40 and
    7.5 <= x ! wbc <= 20 and
    18 <= x ! age <= 89 and
    50 <= x ! weight <= 100

safeNearOutput : InputVector -> Bool
safeNearOutput x = ((normpk x) ! 0) < eps

@property
safeNear : Bool
safeNear = forall x . safeNearInput x => safeNearOutput x

safeInput : InputVector -> Bool
safeInput x =
    0 <= x ! conc <= C_safe and
    36.5 <= x ! temp <= 40 and
    7.5 <= x ! wbc <= 20 and
    18 <= x ! age <= 89 and
    50 <= x ! weight <= 100

nonNegOutput : InputVector -> Bool
nonNegOutput x =  0 < (normpk x) ! 0

@property
nonNeg : Bool
nonNeg = forall x . safeInput x => nonNegOutput x
\end{minted}
\section{Changes made}
\subsection{\vehicle changes}
The specification snippets given in~\autoref{sec:vehicle} are sufficient for the
proof, however, in~\aref{appendix:lst:vcl_spec}, there are more properties that
are slightly different. This is because $(\leq)$ gets translated into $(<)$ through
quantifier duality,
which the SMT-solver Marabou cannot handle, so the property is weakened to
$(\leq)$, which is unsound. Therefore, the specification is instead changed to
use $(<)$ instead of $(\leq)$, but this has consequences. This results in the
\texttt{nonNeg} property actually becoming a positivity property. Furthermore,
\texttt{safe}, as seen in~\autoref{sec:vehicle}, is split into two properties,
\texttt{safeFar} which is logically similar to \texttt{safe}, but has a slightly
reduced input domain and $(\leq)$ is weakened to $(<)$, and \texttt{safeNear}.
\texttt{safeNear} functions to fill the space left in the input domain of
\texttt{safeFar} and the entire range of valid $\csafe$ values. It restricts the
network to output a sufficiently small value, $\epsilon$. Anything less than
this $\epsilon$ value is considered to be 0. This means the properties are stronger
than we would ideally want, and slightly different, but it is provable
that \texttt{safeNear} and \texttt{safeFar} are sufficient for \texttt{safe}, and can be found in
\rocq in~\cite{Sirman_Vancomycert_Rocq_Files_2026}.

\bibliographystyle{splncs04}
\bibliography{bibliography}
\end{document}